\newcommand\be{\begin{equation}}
\newcommand\ee{\end{equation}}
\newcommand\p{\partial}
\DeclareMathOperator{\odd}{odd}
\newtheorem{theorem}{Theorem}
\newtheorem*{theorem*}{Theorem}
\title{KdV solves BKP}
\author{Alexander Alexandrov}
\address{IBS Center for Geometry and Physics,
	Pohang University of Science and Technology (POSTECH),
	77 Cheongam-ro, Nam-gu, Pohang, Gyeongbuk, 37673, Korea
}
\email{ {\tt alexandrovsash at gmail.com}}
\begin{document}

\begin{abstract} In this note, we prove that any tau-function of the KdV hierarchy also solves the BKP hierarchy after a simple rescaling of times.
\end{abstract}

\maketitle

{Keywords: tau-functions, BKP hierarchy, KdV hierarchy, Hirota bilinear identity}\\

\def\thefootnote{\arabic{footnote}}

\setcounter{equation}{0}

In this paper we answer the question about the relation between the KdV and BKP hierarchies, raised in \cite{A}. Namely, in \cite{A} it was observed that several families of the KdV tau-functions important in enumerative geometry and theoretical physics also solve the BKP hierarchy after a simple rescaling of times. Here we prove that {\em any} tau-function of KdV solves the BKP hierarchy.

 In terms of tau-function $\tau_{KP}({\bf t})$ the KP hierarchy, introduced in \cite{SN}, is described by the Hirota bilinear identity
\be\label{HBEKP}
\oint_{\infty} e^{\xi({\bf t-t'},z)}
\tau_{KP} ({\bf t}-[z^{-1}])\tau_{KP} ({\bf t'}+[z^{-1}])dz =0.
\ee
This bilinear identity encodes all nonlinear equations of the KP hierarchy. 
Here we use the standard short-hand notations
\be
{\bf t}\pm [z^{-1}]:= \bigl \{ t_1\pm   
z^{-1}, t_2\pm \frac{1}{2}z^{-2}, 
t_3 \pm \frac{1}{3}z^{-3}, \ldots \bigr \}
\ee
and
\be
\xi({\bf t},z)=\sum_{k>0} t_k z^k.
\ee
If a tau-function of the KP hierarchy does not depend on even time variables,
\be\label{Red}
\frac{\p}{\p t_{2k}}\tau_{KdV}({\bf t})=0 \,\,\,\,\,\,\,\,\,\,\,\, \forall \,k>0,
\ee 
than it is a tau-function of the KdV hierarchy. Since we still have arbitrary $t_{2k}$ and $t_{2k}'$ in the first factor of the integrand in (\ref{HBEKP}),
for the KdV hierarchy the Hirota bilinear identity is given by
\be\label{HBEKP1}
\oint_{\infty} e^{\xi^o({\bf t-t'},z)}\, f(z)\,
\tau_{KdV} ({\bf t}-[z^{-1}])\tau_{KdV} ({\bf t'}+[z^{-1}])dz =0
\ee
for arbitrary series $f(z) \in {\mathbb C}[\![z^2]\!]$. Here we introduce
\be
\xi^o({\bf t},z)= \sum_{k\in{\mathbb Z}_{\odd}^+}t_k z^k.
\ee

In terms of the {\em Hirota derivatives}
\be
P(D_1,D_2,\dots)f\cdot g =\left.P\left(\frac{\p}{\p y_1},\frac{\p}{\p y_2},\dots\right)f({\bf t+y})g({\bf t-y})\right|_{{\bf y}={\bf 0}},
\ee
the first few equations of the KdV hierarchy can be represented as
\begin{equation}
\begin{split}\label{HirKdV}
\left(D_1^4-4D_1D_3\right)\tau_{KdV}\cdot \tau_{KdV}=0,\\
\left(D_1^6+4D_1^3D_3-32D_3^2\right)\tau_{KdV}\cdot \tau_{KdV}=0,\\
\left(D_1^6-20D_1^3D_3-80D_3^2+144D_1D_5\right)\tau_{KdV}\cdot \tau_{KdV}=0.\\
\end{split}
\end{equation}

The BKP hierarchy was introduced by Date, Jimbo, Kashiwara and Miwa in \cite{JMBKP}. 
It can be represented in terms of tau-function $\tau_{BKP}({\bf t})$ by the Hirota bilinear identity similar to (\ref{HBEKP}):
\be\label{HBE}
\frac{1}{2 \pi i}\oint_{\infty} e^{\xi^o({\bf t}-{\bf t}',z)}
\tau_{BKP} ({\bf t}-2[z^{-1}])\tau_{BKP} ({\bf t'}+2[z^{-1}])\frac{dz}{z} =\tau_{BKP} ({\bf t})\tau_{BKP} ({\bf t'}).
\ee
The first equation of the BKP hierarchy in terms of the Hirota derivatives is given by
\be
\left(D_1^6-5D_1^3D_3-5D_3^2+9D_1D_5\right)\tau_{BKP}\cdot \tau_{BKP}=0.
\ee

Note, that if we map $D_k \mapsto D_k/2$ then this equation coincides with the last equation in (\ref{HirKdV}). Let us show that this is more than just a coincidence, and all equations of the BKP hierarchy follow from the KdV hierarchy. The main result of this note is 
\begin{theorem}\label{MT}
For any KdV tau-function  
\be\label{ms}
\tau_{BKP}({\bf t})=\tau_{KdV}({\bf t}/2)
\ee 
is a tau-function of the BKP hierarchy.
\end{theorem}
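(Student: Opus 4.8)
The plan is to derive the BKP Hirota bilinear identity (\ref{HBE}) for $\tau_{BKP}({\bf t})=\tau_{KdV}({\bf t}/2)$ directly from the KdV bilinear identity (\ref{HBEKP1}). The natural idea is to rewrite everything in terms of $\tau_{KdV}$ and show that the BKP identity collapses onto a special case of (\ref{HBEKP1}) for a cleverly chosen free series $f(z)\in\mathbb{C}[\![z^2]\!]$. Let me begin by substituting (\ref{ms}) into the left-hand side of (\ref{HBE}), so that the shifts $\pm 2[z^{-1}]$ in the argument ${\bf t}$ become shifts $\pm[z^{-1}]$ in the halved argument ${\bf t}/2$, matching exactly the structure of the shifts appearing in (\ref{HBEKP1}). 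The challenge is that the BKP identity is inhomogeneous (its right-hand side is $\tau_{BKP}({\bf t})\tau_{BKP}({\bf t'})$, not zero), whereas the KdV identity is homogeneous.

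To reconcile the inhomogeneity, I would move the right-hand side of (\ref{HBE}) into the contour integral. The key observation is that $\tau_{BKP}({\bf t})\tau_{BKP}({\bf t'})$ equals the residue at $z=\infty$ of a term with no $z$-dependence aside from $dz/z$; more precisely, I would write the product as a contour integral of $e^{\xi^o({\bf t}-{\bf t}',z)}\tau_{BKP}({\bf t})\tau_{BKP}({\bf t'})$ against an appropriate kernel, using the fact that $\oint_\infty g(z)\,dz/z$ extracts the constant term in $z$. This recasts the whole BKP identity as a single vanishing contour integral
\be\label{plan1}
\frac{1}{2\pi i}\oint_\infty e^{\xi^o({\bf t}-{\bf t}',z)}\Bigl[\tau_{KdV}\bigl(\tfrac{\bf t}{2}-[z^{-1}]\bigr)\tau_{KdV}\bigl(\tfrac{{\bf t}'}{2}+[z^{-1}]\bigr)-\tau_{KdV}\bigl(\tfrac{\bf t}{2}\bigr)\tau_{KdV}\bigl(\tfrac{{\bf t}'}{2}\bigr)\Bigr]\frac{dz}{z}=0.
\ee
Setting ${\bf s}={\bf t}/2$ and ${\bf s}'={\bf t}'/2$, this should become an identity purely about $\tau_{KdV}$ evaluated at half-integer-related times, with $\xi^o({\bf t}-{\bf t}',z)=2\,\xi^o({\bf s}-{\bf s}',z)$ producing the predicted rescaling $D_k\mapsto D_k/2$.

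Next I would reduce (\ref{plan1}) to the KdV bilinear identity (\ref{HBEKP1}). The essential step is to choose the free function $f(z)$ in (\ref{HBEKP1}) so that the odd-power exponential $e^{2\xi^o({\bf s}-{\bf s}',z)}$ and the weight $1/z$ are reproduced, while the even times in (\ref{HBEKP1}) are specialized appropriately. Since $\tau_{KdV}$ is independent of even times by (\ref{Red}), I have the freedom to shift the even components of ${\bf s}$ and ${\bf s}'$ without changing the tau-function values; I would exploit this to convert the full KP exponential $\xi$ into the odd exponential $\xi^o$ and to absorb the ``constant'' subtracted term in (\ref{plan1}) into the $z$-dependence. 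I expect that the difference of the two products in the bracket arises naturally from comparing (\ref{HBEKP1}) at two different specializations of $f$, or equivalently from splitting the KP residue into its odd and even parts under $z\mapsto -z$.

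\textbf{The main obstacle} will be handling the inhomogeneous term correctly: I must verify that the product $\tau_{KdV}(\tfrac{\bf t}{2})\tau_{KdV}(\tfrac{{\bf t}'}{2})$ on the right of (\ref{HBE}) is precisely the residue contribution that makes the KP integral, restricted to odd times and weighted by $dz/z$, reproduce the BKP kernel. Concretely, the subtlety is that the BKP integrand has an extra factor $1/z$ compared to KP and the shift is doubled, so a careful change of variables (likely $z\mapsto -z$ combined with symmetrization) is needed to show that the even-in-$z$ part of the integrand contributes exactly the right-hand side while the odd-in-$z$ part vanishes by the KdV identity. Once the bookkeeping of these parities and the rescaling of the exponential is done, the BKP identity (\ref{HBE}) should follow as a consequence of (\ref{HBEKP1}) for a suitable $f(z)$, completing the proof.
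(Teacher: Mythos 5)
Your proposal follows essentially the same route as the paper: the paper simply takes $f(z)=\bigl(e^{\xi^o({\bf t}-{\bf t}',z)}-e^{-\xi^o({\bf t}-{\bf t}',z)}\bigr)/z\in{\mathbb C}[\![z^2]\!]$ in (\ref{HBEKP1}), uses the constant-term evaluation $\frac{1}{2\pi i}\oint_{\infty}\tau_{KdV}({\bf t}-[z^{-1}])\,\tau_{KdV}({\bf t}'+[z^{-1}])\frac{dz}{z}=\tau_{KdV}({\bf t})\,\tau_{KdV}({\bf t}')$ to produce exactly the inhomogeneous term you isolate in your bracketed identity, and then rescales ${\bf t}\mapsto{\bf t}/2$, ${\bf t}'\mapsto{\bf t}'/2$ to match (\ref{HBE}). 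Your residue mechanism for the right-hand side and your parity observation under $z\mapsto -z$ are precisely what this explicit odd-symmetrized choice of $f$ implements, so your plan is correct and completes directly to the paper's proof.
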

\begin{proof}
Let us consider 
\be
f(z)=\frac{e^{\xi^o({\bf t}-{\bf t}',z)}-e^{-\xi^o({\bf t}-{\bf t}',z)}}{z}  \in {\mathbb C}[\![z^2]\!],
\ee
then the Hirota bilinear identity (\ref{HBEKP1}) for the KdV hierarchy leads to 
\be
\frac{1}{2 \pi i}\oint_{\infty} e^{2\xi^0({\bf t-t'},z)} 
\tau_{KdV} ({\bf t}-[z^{-1}])\tau_{KdV} ({\bf t'}+[z^{-1}])\frac{dz}{z} =\tau_{KdV}  ({\bf t})\tau_{KdV}  ({\bf t'}).
\ee
Now we substitute the times ${\bf t} \mapsto {\bf t}/2$, ${\bf t}' \mapsto {\bf t}'/2$:
\be
\frac{1}{2 \pi i}\oint_{\infty} e^{\xi^o({\bf t}-{\bf t}',z)}
\tau_{KdV} ({\bf t}/2-[z^{-1}])\tau_{KdV} ({\bf t'}/2+[z^{-1}])\frac{dz}{z} =\tau_{KdV}  ({\bf t}/2)\tau_{KdV}  ({\bf t'}/2).
\ee
This equation coincides with the Hirota bilinear identity for the BKP hierarchy (\ref{HBE}) if we identify tau-functions by (\ref{ms}). This completes the proof.
\end{proof}
It is easy to see that the converse statement is false, that is, the KdV hierarchy is a certain reduction of the BKP hierarchy. This reduction, as well as similar description of the higher Gelfand--Dickey hierarchies, will be discussed elsewhere. 

This theorem partially explains the results of \cite{A,MM}. Indeed, in \cite{MM} it was noted that the Kontsevich-Witten tau-function - one of the most important solutions of the KdV hierarchy - has a simple expansion in terms of the Schur Q-functions. This result was generalized in
\cite{A} to a family of KdV tau-functions related to the Br\'ezin--Gross--Witten model. On the basis of these expansions the author has conjectured (for the Kontsevich-Witten tau-function) and proved (for the Br\'ezin--Gross--Witten tau-function) that these KdV tau-functions also solve the BKP hierarchy. From Theorem \ref{MT} it follows, that this part of the conjectures in \cite{A} trivially follows from a general relation between the KdV and BKP hierarchies. Moreover, this theorem makes the Schur Q-functions a natural basis for expansion of the KdV tau-functions.

There are several different ways to relate KdV and BKP hierarchies. In particular, in \cite{JMred} the authors describe an identification of KdV hierarchy with the 4-reduction of BKP, see Remark on page 1098. Another reduction from BKP to KdV (so-called 1-constrained BKP) was described in \cite{Orl1,Orl,Cheng}. To the best of our understanding, these reductions do not coincide with the one considered in this paper.
 
 \section*{Acknowledgments}
The author is grateful to J. van de Leur, A. Mironov, A. Morozov, and A. Orlov for useful discussions.  This work was supported by IBS-R003-D1.

\bibliographystyle{alpha}
\bibliography{KPTRref}

\end{document}